\newcommand{\N}{\mathds{N}}
\newcommand{\Z}{\mathds{Z}}
\newcommand{\Q}{\mathds{Q}}
\newcommand{\R}{\mathds{R}}
\newcommand{\1}{\mathds{1}}
\newcommand{\loc}{\mathrm{loc}}
\newcommand{\spt}{\mathrm{spt}}
\providecommand{\abs}[1]{\left\lvert#1\right\rvert}
\providecommand{\norm}[1]{\left\lVert#1\right\rVert}
\providecommand{\set}[1]{\left\{ #1\right\}}
\providecommand{\form}{\tau}
\providecommand{\scpr}[2]{\left( #1 \,\middle|\, #2 \right)}
\renewcommand{\sp}{\scpr}
\newcommand\rlim{
\mathchoice{\vcenter{\hbox{${\scriptstyle{+}}$}}}
{\vcenter{\hbox{$\scriptstyle{+}$}}}
{\vcenter{\hbox{$\scriptscriptstyle{+}$}}}
{\vcenter{\hbox{$\scriptscriptstyle{+}$}}}}
\theoremstyle{plain} % default
\newtheorem{theorem}{Theorem}[section]
\newtheorem{lemma}[theorem]{Lemma}
\theoremstyle{definition}
\newtheorem{example}[theorem]{Example}
\newtheorem{definition}{Definition}
\newtheorem{remark}[theorem]{Remark}
\begin{document}

\title{Gordon type Theorem for measure perturbation}
\author{Christian Seifert}

\maketitle

\begin{abstract}
	Generalizing the concept of Gordon potentials to measures we prove a version of Gordon's theorem for measures as potentials and show absence of eigenvalues for these one-dimensional Schr\"odinger operators.

	\vspace{8pt}
	
	\noindent
	MSC 2010: 34L05, 34L40, 81Q10 

	\vspace{2pt}
	
	\noindent
	Keywords: Schr\"odinger operators, eigenvalue problem, quasiperiodic measure potentials
\end{abstract}

\section{Introduction}
\label{sec:introduction}

According to \cite{DamanikStolz2000}, the one-dimensional Schr\"odinger 
operator $H=-\Delta+V$ has no eigenvalues if the potential $V\in 
L_{1,\loc}(\R)$ can be approximated by periodic potentials (in a suitable 
sense). The aim of this paper is to generalize this result to measures $\mu$ 
instead of potential functions $V$, i.e., to more singular potentials.

Although all statements remain valid for complex measures we only focus on real (but signed) measures $\mu$, 
since we are interested in self-adjoint operators.

In the remaining part of this section we explain the situation and define the 
operator in question. We also describe the class of measures we are 
concerned with.
Section \ref{sec:solution_estimates} provides all the tools we need to prove 
the main theorem: $H=-\Delta+\mu$ has no eigenvalues for suitable $\mu$.
In section \ref{sec:examples} we show some examples for Schr\"odinger operators with measures as potentials.

We consider a Schr\"odinger operator of the form
\[ H = -\Delta + \mu\]
on $L_2(\R)$. Here, $\mu=\mu_+ - \mu_-$ is a signed Borel measure on $\R$ with 
locally finite total variation $\abs{\mu}$.

We define $H$ via form methods. To this end, we need to establish form 
boundedness of $\mu_-$. Therefore, we restrict the class of measures we want to
consider.

\begin{definition}
	A signed Borel measure $\mu$ on $\R$ is called \emph{uniformly locally bounded}, if
	\[\norm{\mu}_{\loc} := \sup_{x\in \R} \abs{\mu}([x,x+1]) < \infty.\]
	We call $\mu$ a \emph{Gordon measure} if $\mu$ is uniformly locally bounded
	and if there exists a sequence $(\mu^{m})_{m\in\N}$ of uniformly locally bounded periodic 
	Borel measures with period sequence $(p_m)$ such that $p_m \to \infty$ and 
	for all $C\in \R$ we have
	\[\lim_{m\to\infty} e^{Cp_m}\abs{\mu-\mu^{m}}([-p_m,2p_m]) = 0,\]
	i.e., $(\mu^{m})$ approximates $\mu$ on increasing intervals.
	Here, a Borel measure is $p$-periodic, if $\mu = \mu(\cdot + p)$.
\end{definition}

Clearly, every generalized Gordon potential $V\in L_{1,\loc}$ as defined in 
\cite{DamanikStolz2000} induces a Gordon measure $\mu=V\,\lambda$, where 
$\lambda$ is the Lebegue measure on $\R$. Therefore, also every Gordon 
potential (see the original work \cite{Gordon1976}) induces a Gordon measure.

\begin{lemma}
	Let $\mu$ be a uniformly locally bounded measure. Then $\abs{\mu}$ is $-\Delta$-form bounded, and for all $0<c<1$ there is $\gamma\geq 0$ such that
	\[\int_{\R} \abs{u}^2\, d\abs{\mu} \leq c \norm{u'}_2^2 + \gamma \norm{u}_2^2 \quad(u\in W_{2}^1(\R)).\]
\end{lemma}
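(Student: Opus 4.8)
The plan is a standard localization argument: cover $\R$ by unit intervals, estimate $\int_{\R}\abs{u}^2\,d\abs{\mu}$ on each by the supremum of $\abs{u}^2$ there times $\norm{\mu}_{\loc}$, and use that on an interval of length one $W_2^1$ embeds into $C$ with a constant in front of $\norm{u'}_2^2$ that can be made arbitrarily small at the expense of a large multiple of $\norm{u}_2^2$. Throughout, $u\in W_2^1(\R)$ is identified with its (absolutely) continuous representative, so that pointwise values and the integral against $\abs{\mu}$ are meaningful.

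First I would record the local estimate. Fix an interval $I$ of length $1$ and $u\in W_2^1(I)$. For $x,y\in I$ one has $u(x)^2 = u(y)^2 + 2\int_y^x uu'$; integrating over $y\in I$, using $|I|=1$, and applying Cauchy--Schwarz gives, for every $\delta>0$,
\[
\norm{u}_{\infty,I}^2 \le \norm{u}_{L_2(I)}^2 + 2\norm{u}_{L_2(I)}\norm{u'}_{L_2(I)} \le \delta\norm{u'}_{L_2(I)}^2 + \bigl(1+\tfrac1\delta\bigr)\norm{u}_{L_2(I)}^2,
\]
with a constant independent of the location of $I$ by translation invariance.

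Next I would decompose $\int_{\R}\abs{u}^2\,d\abs{\mu} = \sum_{n\in\Z}\int_{[n,n+1)}\abs{u}^2\,d\abs{\mu} \le \sum_{n\in\Z}\norm{u}_{\infty,[n,n+1]}^2\,\abs{\mu}([n,n+1])$ (interchanging sum and integral is legitimate by monotone convergence, everything being nonnegative). Since $\abs{\mu}([n,n+1])\le\norm{\mu}_{\loc}$ for every $n$, applying the local estimate on each $[n,n+1]$ and summing the pieces of $\norm{u'}_2^2$ and $\norm{u}_2^2$ yields
\[
\int_{\R}\abs{u}^2\,d\abs{\mu} \le \norm{\mu}_{\loc}\Bigl(\delta\norm{u'}_2^2 + \bigl(1+\tfrac1\delta\bigr)\norm{u}_2^2\Bigr).
\]
Given $0<c<1$, choosing $\delta := c/\norm{\mu}_{\loc}$ (the case $\norm{\mu}_{\loc}=0$ being trivial) and $\gamma := \norm{\mu}_{\loc} + \norm{\mu}_{\loc}^2/c$ gives the claim; in particular the relative $-\Delta$-form bound of $\abs{\mu}$ is $0$.

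I do not expect a genuine obstacle here. The only points that need a little care are passing to the continuous representative of $u$ so that $\int_{[n,n+1)}\abs{u}^2\,d\abs{\mu}\le\norm{u}_{\infty,[n,n+1]}^2\,\abs{\mu}([n,n+1])$ is justified, ensuring the local Sobolev constant is uniform over all unit intervals (which follows from translation invariance), and the routine interchange of summation and integration.
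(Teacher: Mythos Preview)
Your proof is correct and follows essentially the same localization-plus-Sobolev strategy as the paper. The only cosmetic difference is that the paper covers $\R$ by intervals of variable length $\delta\in(0,1)$ and uses the $\delta$-dependent Sobolev inequality $\norm{u}_{\infty,[n\delta,(n+1)\delta]}^2 \le 4\delta\norm{u'}_{L_2}^2 + \tfrac{4}{\delta}\norm{u}_{L_2}^2$, whereas you fix unit intervals and obtain the small constant via Young's inequality; both mechanisms achieve an arbitrarily small relative bound in the same way.
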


\begin{proof}
	For $\delta\in(0,1)$ and $n\in \Z$ we have
	\[\norm{u}_{\infty,[n\delta,(n+1)\delta]}^2 \leq 4 \delta \norm{u'}_{L_2(n\delta,(n+1)\delta)}^2 + \frac{4}{\delta} \norm{u}_{L_2(n\delta,(n+1)\delta)}^2\]
	by Sobolev's inequality.
	
	Now, we estimate
	\begin{align*}
		\int_{\R} \abs{u}^2\, d\abs{\mu} & = \sum_{n\in \Z} \int_{n\delta}^{(n+1)\delta} \abs{u}^2\, d\abs{\mu} \\
		& \leq \sum_{n\in \Z} \norm{u}_{\infty,[n\delta,(n+1)\delta]}^2 \norm{\mu}_{\loc} \\
		& \leq \norm{\mu}_{\loc} \sum_{n\in \Z} \left(4 \delta \norm{u'}_{L_2(n\delta,(n+1)\delta)}^2 + \frac{4}{\delta} \norm{u}_{L_2(n\delta,(n+1)\delta)}^2\right)  \\
		& = 4\delta\norm{\mu}_{\loc} \norm{u'}_2^2 + \frac{4\norm{\mu}_{\loc}}{\delta} \norm{u}_2^2.
	\end{align*}
\end{proof}

Let $\mu$ be a Gordon measure and define
\begin{align*}
	D(\form) & := W_2^1(\R),\\
	\form(u,v) & := \int u'\overline{v}' + \int u\overline{v}\, d\mu.
\end{align*}
Then $\form$ is a closed symmetric semibounded form. Let $H$ be the associated 
self-adjoint operator.

In \cite{BenAmorRemling2005}, Ben Amor and Remling introduced a direct approach
for defining the Schr\"odinger operator $H = -\Delta + \mu$. Since we will
use some of their results we sum up the main ideas: For $u\in W_{1,\loc}^1(\R)$ define $Au\in L_{1,\loc}(\R)$ by
\[Au(x) := u'(x) - \int_{0}^x u(t)\, d\mu(t),\]
where
\[\int_{0}^x u(t)\, d\mu(t) := \begin{cases}
                               	\int_{[0,x]} u(t)\, d\mu(t) & \mbox{if}\; x\geq 0,\\
				-\int_{(x,0)} u(t)\, d\mu(t) & \mbox{if}\; x< 0.
                              \end{cases}\]
Clearly, $Au$ is only defined as an $L_{1,\loc}(\R)$-element. We define the operator $T$ in $L_2(\R)$ by
\begin{align*}
	D(T) & := \set{u\in L_2(\R);\; u,Au\in W_{1,\loc}^1(\R),\, (Au)'\in L_2(\R)},\\
	Tu & := -(Au)'.
\end{align*}

\begin{lemma}
\label{lem:Hsubseteq_T}
	$H\subseteq T$.
\end{lemma}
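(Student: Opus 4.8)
First I would pick an arbitrary $u\in D(H)$ and check directly that $u\in D(T)$ with $Tu=Hu$. Since $D(H)\subseteq D(\form)=W_2^1(\R)\subseteq L_2(\R)\cap W_{1,\loc}^1(\R)$, the conditions $u\in L_2(\R)$ and $u\in W_{1,\loc}^1(\R)$ hold automatically; throughout I work with the continuous representative of $u$ (available by the Sobolev embedding $W_2^1(\R)\hookrightarrow C_0(\R)$), so that $\int u\,d\mu$ is unambiguous even when $\mu$ carries atoms. Since $H$ is the self-adjoint operator associated with $\form$, there is $f:=Hu\in L_2(\R)$ with
\[\form(u,v)=\int_\R f\,\overline v\qquad(v\in W_2^1(\R)).\]
It then remains to show that $Au\in W_{1,\loc}^1(\R)$ with $(Au)'=-f\in L_2(\R)$, for then $u\in D(T)$ and $Tu=-(Au)'=f=Hu$.

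The heart of the matter is the identity
\[\int_\R (Au)(x)\,\varphi'(x)\,dx=\int_\R f(x)\,\varphi(x)\,dx\qquad(\varphi\in C_c^\infty(\R)),\]
which is exactly the assertion that $Au$ has weak derivative $(Au)'=-f$. To establish it, write $(Au)(x)=u'(x)-F(x)$ with $F(x):=\int_0^x u\,d\mu$. The function $F$ is Borel measurable and locally bounded, since $\abs{F(x)}\leq\norm{u}_{\infty,[-R,R]}\,\abs\mu([-R,R])$ for $\abs x\leq R$; in particular $F\in L_{1,\loc}(\R)$, hence $Au\in L_{1,\loc}(\R)$. Next write $F(x)=\int_\R K(x,t)\,u(t)\,d\mu(t)$ with $K(x,t)=\1_{[0,x]}(t)$ for $x\geq 0$ and $K(x,t)=-\1_{(x,0)}(t)$ for $x<0$, and interchange the $dx$- and $d\mu$-integrals by Fubini (legitimate because $\varphi$ has compact support, $u$ is bounded on that support, and $\abs\mu$ is locally finite):
\[\int_\R F(x)\,\varphi'(x)\,dx=\int_\R u(t)\left(\int_\R K(x,t)\,\varphi'(x)\,dx\right)d\mu(t).\]
For fixed $t$ the inner integral equals $-\varphi(t)$: if $t\geq 0$ then $K(\cdot,t)=\1_{[t,\infty)}$, so it is $\int_t^\infty\varphi'=-\varphi(t)$; if $t<0$ then $K(\cdot,t)=-\1_{(-\infty,t)}$, so it is $-\int_{-\infty}^t\varphi'=-\varphi(t)$; here $\varphi(\pm\infty)=0$ is used. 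Hence $\int_\R F\varphi'=-\int_\R u\varphi\,d\mu$, and therefore
\[\int_\R (Au)\,\varphi'=\int_\R u'\varphi'+\int_\R u\varphi\,d\mu=\form(u,\overline\varphi)=\int_\R f\,\varphi,\]
where the penultimate equality is the definition of $\form$ applied to $v=\overline\varphi\in C_c^\infty(\R)\subseteq W_2^1(\R)$ (the conjugate of $\overline\varphi$ being $\varphi$) and the last one is the form identity with this $v$.

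The displayed identity shows that $Au\in L_{1,\loc}(\R)$ has distributional derivative $-f\in L_2(\R)\subseteq L_{1,\loc}(\R)$, hence $Au\in W_{1,\loc}^1(\R)$ and $(Au)'=-f\in L_2(\R)$. Consequently $u\in D(T)$ and $Tu=-(Au)'=f=Hu$. As $u\in D(H)$ was arbitrary, $H\subseteq T$.

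The one genuinely delicate point is the Fubini step combined with the piecewise definition of $\int_0^x$: the sign flip at $x=0$ must be tracked so that the inner integral collapses to precisely $-\varphi(t)$ with no leftover boundary contribution, and one must commit to the continuous representative of $u$ before integrating against $\mu$. Everything else is routine bookkeeping with the representation of the form-associated operator.
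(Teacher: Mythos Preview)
Your proof is correct and follows essentially the same approach as the paper: pick $u\in D(H)$, test $Au$ against $\varphi'$ for $\varphi\in C_c^\infty(\R)$, and use Fubini on the double integral $\int_\R\int_0^x u\,d\mu\,\varphi'(x)\,dx$ to recover $\form(u,\overline\varphi)=\sp{Hu}{\overline\varphi}$, whence $(Au)'=-Hu$. Your version is somewhat more explicit (the kernel $K$, the continuous representative, the local boundedness of $F$), but the substance is identical.
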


\begin{proof}
	Let $u\in D(H)$. Then $u\in W_{2}^1(\R)\subseteq W_{1,\loc}^1(\R)$ and 
	$Au\in L_{1,\loc}(\R)$. Let $\varphi\in C_c^\infty(\R)\subseteq D(\form)$. 
	Using Fubini's Theorem, we compute
	\begin{align*}
		& \int_{\R} (Au)(x) \varphi'(x)\, dx \\
		& = \int_{\R} \left(u'(x) - \int_{0}^x u(t)\, d\mu(t)\right)\varphi'(x)\, dx \\
		& = \int_{\R} u'(x)\varphi'(x)\, dx - \int_{\R}\int_{0}^x u(t)\, d\mu(t) \varphi'(x)\, dx \\
		& = \int_{\R} u'(x)\varphi'(x)\, dx
		+ \int_{-\infty}^0\int_{(-\infty,t)}\varphi'(x)\, dx  u(t)\, d\mu(t) - \int_{0}^\infty\int_{[t,\infty)}\varphi'(x)\, dx  u(t)\, d\mu(t)\\
		& = \int_{\R} u'(x)\varphi'(x)\, dx + \int_{-\infty}^0 u(t) \varphi(t)\, d\mu(t) + \int_{0}^\infty u(t)\varphi(t)\, d\mu(t)\\
		& = \int_{\R} u'\varphi' + \int_{\R} u(x)\varphi(x)\, dx 
		 = \form(u,\overline{\varphi}) = \sp{Hu}{\overline{\varphi}} 
		 = \int_\R Hu(x) \varphi(x)\, dx.
	\end{align*}
	Hence, $(Au)' = -Hu\in L_2(\R)$. We conclude that $Au\in W_{1,\loc}^1(\R)$ and therefore $u\in D(T)$, $Tu = -(Au)' = Hu$.
\end{proof}

\begin{remark}
	For $u\in D(H)$ we obtain
	\[u'(x) = Au(x) + \int_{0}^x u(t)\, d\mu(t)\]
	for a.a. $x\in\R$. Since $Au\in W_{1,\loc}^1(\R)$ and $x\mapsto \int_{0}^x u(t)\, d\mu(t)$ is continuous at all $x\in\R$ with $\mu(\set{x}) = 0$, $u'$ is continuous at $x$ for all $x\in \R\setminus \spt \mu_p$, where $\mu_p$ is the point measure part of $\mu$.
\end{remark}

\section{Absence of eigenvalues}
\label{sec:solution_estimates}

We show that $H$ has no eigenvalues. The proof is based on two observations. 
The first one is a stability result and will be achieved in Lemma 
\ref{lem:expest}, the second one is an estimate of the solution for periodic 
measure perturbations, see Lemma \ref{lem:estimate}.

As in \cite{DamanikStolz2000} we start with a Gronwall Lemma, but in a more general version for locally finite measures. For the proof, see \cite{EthierKurtz2005}. % Appendix 5.1

\begin{lemma}[Gronwall]
\label{lem:gronwall}
	Let $\mu$ be a locally finite Borel measure on $[0,\infty)$, $u\in \mathcal{L}_{1,\loc}([0,\infty),\mu)$ and $\alpha:[0,\infty)\to [0,\infty)$ measurable. Suppose, that
	\[u(x) \leq \alpha(x) + \int_{[0,x]} u(s)\, d\mu(s) \quad(x\geq0).\]
	Then
	\[u(x) \leq \alpha(x) + 
\int_{[0,x]}\alpha(s)\exp\bigl(\mu([s,x])\bigr)\,d\mu(s) \quad(x\geq0).\]
\end{lemma}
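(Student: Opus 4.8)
The plan is to run a Picard-type iteration on the Volterra integral inequality. Introduce the operator $T$ on (signed, locally $\mu$-integrable) functions by $Tf(x):=\int_{[0,x]}f(s)\,d\mu(s)$; it is linear, positivity preserving, hence monotone, and the hypothesis reads $u\le\alpha+Tu$. Iterating this inequality and using monotonicity of $T$, a straightforward induction gives
\[ u(x)\le\sum_{k=0}^{n}(T^{k}\alpha)(x)+(T^{n+1}u)(x)\qquad(x\ge 0,\ n\in\N), \]
with $T^{0}\alpha=\alpha$. So it suffices to (i) identify $\sum_{k\ge 1}T^{k}\alpha$ with the claimed integral and (ii) show the remainder $(T^{n+1}u)(x)\to 0$ as $n\to\infty$; we may assume the right-hand side of the asserted inequality is finite, else there is nothing to prove.

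For (i), unfold the $k$-fold iterated integral and apply Tonelli's theorem (all integrands are nonnegative): for $k\ge 1$,
\[ (T^{k}\alpha)(x)=\int_{[0,x]}\alpha(s)\,V_{k-1}(s,x)\,d\mu(s),\qquad V_{j}(s,x):=\mu^{\otimes j}\bigl(\set{(s_{1},\dots,s_{j}):s\le s_{j}\le\dots\le s_{1}\le x}\bigr). \]
The combinatorial heart of the argument is the simplex estimate
\[ V_{j}(s,x)\le\frac{\mu([s,x])^{j}}{j!}, \]
which is an equality when $\mu$ is non-atomic — the $j!$ orderings of the cube $[s,x]^{j}$ cover it and overlap only on diagonals, which are $\mu^{\otimes j}$-null. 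Granting it, summing over $k$ and passing to the limit inside the integral by monotone convergence yields
\[ \sum_{k=1}^{\infty}(T^{k}\alpha)(x)\le\int_{[0,x]}\alpha(s)\sum_{j=0}^{\infty}\frac{\mu([s,x])^{j}}{j!}\,d\mu(s)=\int_{[0,x]}\alpha(s)\exp\bigl(\mu([s,x])\bigr)\,d\mu(s). \]

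For (ii), fix $x\ge 0$ and set $M:=\mu([0,x])$, which is finite since $\mu$ is locally finite; also $\int_{[0,x]}\abs{u}\,d\mu<\infty$ because $u\in\mathcal{L}_{1,\loc}([0,\infty),\mu)$. The same Tonelli computation gives $\abs{(T^{n+1}u)(x)}\le\int_{[0,x]}\abs{u(s)}\,V_{n}(s,x)\,d\mu(s)\le\frac{M^{n}}{n!}\int_{[0,x]}\abs{u}\,d\mu\to 0$. Letting $n\to\infty$ in the iteration inequality above and combining with (i) proves the lemma.

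The step I expect to be the main obstacle is the simplex estimate $V_{j}(s,x)\le\mu([s,x])^{j}/j!$: it is transparent for non-atomic $\mu$, but genuinely requires care once $\mu$ carries mass at points, and one must then be attentive to whether the endpoint is included in the range of integration. Verifying it at the level of generality needed here is precisely the technical content we take from \cite{EthierKurtz2005}; the iteration, the Tonelli bookkeeping, and the convergence of the remainder are routine.
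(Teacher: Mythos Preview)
The paper provides no proof; it simply cites \cite{EthierKurtz2005}. Your Picard iteration is the standard route and is essentially what one finds there, so at the level of strategy there is nothing to compare.

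The gap is exactly where you place it, but it is worse than you indicate: the simplex estimate $V_j(s,x)\le\mu([s,x])^j/j!$ is not merely delicate for atomic $\mu$, it is false. Take $\mu=\delta_a$ with $s\le a\le x$; the only point carrying mass in $[s,x]^j$ is the diagonal point $(a,\dots,a)$, which lies in the closed ordered simplex, so $V_j=1$ while $\mu([s,x])^j/j!=1/j!$, and the inequality goes the wrong way. Your remainder bound in (ii) collapses for the same reason: with $\mu=\delta_0$ and $u\equiv c$ one has $T^n u\equiv c$ for every $n$. The fix is to define $T$ over half-open intervals, $Tf(x)=\int_{[0,x)}f\,d\mu$; the strictly ordered simplices are then pairwise disjoint subsets of the cube, and symmetry yields the $1/j!$ bound honestly, after which (i) and (ii) go through. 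This is not a cosmetic change: with $\mu=\delta_0$, $u\equiv 1$, $\alpha\equiv 0$ the hypothesis of the lemma as written in the paper holds ($1\le 0+1$) but the stated conclusion reads $1\le 0$, so the closed-interval formulation of the lemma itself needs the same adjustment. Your instinct about where the difficulty lives is correct; the repair lies in the choice of integration domain, not in a sharper proof of the closed-simplex bound.
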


For $x\in\R$ we abbreviate
\[I_x:= [x\wedge0,x\vee0]\]
and
\[I_x(t):= I_x\cap([t,x]\cup[x,t]) \quad(t\in\R).\]

Let $\mu$ be uniformly locally bounded. Then
\[\abs{\mu}(I_x) \leq (\abs{x}+1)\norm{\mu}_\loc \quad(x\in \R).\]
Furthermore, if $\mu$ is periodic and locally bounded, $\mu$ is uniformly locally bounded.

Let $H := -\Delta + \mu$ and $E\in \R$. Then $u\in W_{1,\loc}^1(\R)$ ($=D(A)$) is a \emph{solution} of $Hu=Eu$, if $-(Au)' = Eu$ in the sense of distributions (i.e., $u$ satisfies the eigenvalue equation but without being an $L_2$-function).

\begin{lemma}
\label{lem:cont}
	Let $\mu_1,\mu_2$ be two uniformly locally bounded measures, $E\in \R$ and $u_1$ and $u_2$ solutions of
	\[H_1 u_1 = E u_1,\qquad H_2 u_2 = E u_2\]
	subject to
	\[u_1(0) = u_2(0),\quad u_1'(0\rlim) = u_2'(0\rlim),\quad \abs{u_1(0)}^2 + \abs{u_1'(0\rlim)}^2 = 1.\]
	Then there are $C_0,C\geq 0$ such that for all $x\in\R$
	\begin{align*}
		& \norm{\begin{pmatrix}
	         	u_1(x) \\
			u_1'(x)
	        \end{pmatrix}
		- 
		\begin{pmatrix}
	         	u_2(x) \\
			u_2'(x)
	        \end{pmatrix}}  \\
		& \leq C_0 + \int_{I_x} \abs{u_2(t)}\, d \abs{\mu_1-\mu_2}(t) \\
		& + \!C\!\!\int_{I_x} \!\Big(\!C_0 + \int_{I_t} \!\abs{u_2} d \abs{\mu_1-\mu_2}\Big)e^{C(\lambda + \abs{\mu_1 - E\lambda})(I_x(t))} \, d(\lambda + \abs{\mu_1 - E\lambda})(t).
	\end{align*}
\end{lemma}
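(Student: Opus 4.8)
The plan is to convert each eigenvalue equation into an integral equation, subtract the two, and apply the measure Gronwall Lemma \ref{lem:gronwall} to the resulting scalar inequality.

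First I would record the integral form of a solution. If $H_j u_j = E u_j$, then $A u_j \in W_{1,\loc}^1(\R)$ with $(A u_j)' = -E u_j$, so $A u_j(x) = A u_j(0) - E\int_0^x u_j$; choosing for $u_j'$ the locally bounded, right-continuous $BV$ representative $u_j' = A u_j + \int_0^\cdot u_j\,d\mu_j$, this yields
\[
u_j'(x) = A u_j(0) - E\int_0^x u_j(t)\,dt + \int_0^x u_j(t)\,d\mu_j(t),
\qquad
u_j(x) = u_j(0) + \int_0^x u_j'(t)\,dt,
\]
with $A u_j(0) = u_j'(0\rlim) - u_j(0)\mu_j(\{0\})$.

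Next I would subtract. Put $w := u_1 - u_2$. Using $u_1(0)=u_2(0)$, $u_1'(0\rlim)=u_2'(0\rlim)$, the splitting $\int u_1\,d\mu_1 - \int u_2\,d\mu_2 = \int w\,d\mu_1 + \int u_2\,d(\mu_1-\mu_2)$, and absorbing $-E\lambda$ into $\mu_1$ to form the single signed measure $\mu_1 - E\lambda$, one gets for $x\geq 0$
\[
w(x) = \int_{[0,x]} (u_1'-u_2')\,d\lambda,
\qquad
(u_1'-u_2')(x) = \int_{(0,x]} w\,d(\mu_1-E\lambda) + \int_{(0,x]} u_2\,d(\mu_1-\mu_2),
\]
the only extra contribution being the atomic term $u_2(0)(\mu_1-\mu_2)(\{0\})$ at the origin, which is absorbed into the last integral since $\{0\}\subseteq I_x$ and $\abs{u_2(0)}\leq 1$ (it is in any case $\leq\norm{\mu_1}_\loc+\norm{\mu_2}_\loc$, where the constant $C_0$ may be hidden if preferred). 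Now set $\nu := \lambda + \abs{\mu_1 - E\lambda}$ (locally finite), $\phi := \abs{w} + \abs{u_1'-u_2'}$, and $\alpha(x) := C_0 + \int_{I_x}\abs{u_2}\,d\abs{\mu_1-\mu_2}$. Taking absolute values and adding the two identities, and using $\lambda,\abs{\mu_1-E\lambda}\leq\nu$ together with $\abs w,\abs{u_1'-u_2'}\leq\phi$, a short computation gives $\phi(x)\leq \alpha(x) + \int_{[0,x]}\phi\,d\nu$ for $x\geq 0$. Since $w\in W_{1,\loc}^1(\R)$ and $u_1'-u_2'$ is locally bounded, $\phi\in\mathcal{L}_{1,\loc}([0,\infty),\nu)$ and $\alpha$ is nonnegative and measurable, so Lemma \ref{lem:gronwall} applies and yields
\[
\phi(x)\leq \alpha(x) + \int_{[0,x]}\alpha(s)\exp\bigl(\nu([s,x])\bigr)\,d\nu(s)
\qquad(x\geq 0).
\]

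Finally, the case $x\leq 0$ is handled identically after reflecting everything through the origin, producing the same bound with $[0,x]$ replaced by $[x,0]$ and $\nu([s,x])$ by $\nu([x,s])$. Since $\norm{W(x)}\leq\phi(x)$ for the difference vector $W(x)$ of the statement, and since $I_x(t)=[t,x]$ for $0\leq t\leq x$ and $I_x(t)=[x,t]$ for $x\leq t\leq 0$, so that $\nu(I_x(t))$ equals $\nu([s,x])$ resp.\ $\nu([x,s])$, the two half-line estimates combine into exactly the asserted inequality, with $C=1$ (enlarge $C$ if one insists on absorbing the passage from $\phi$ to the Euclidean norm). The main obstacle here is not analytical but bookkeeping: one must keep careful track of the atoms of $\mu_1,\mu_2$ at $0$ — this is why the initial condition is posed with the one-sided derivative $u_j'(0\rlim)$ and why the harmless constant $C_0$ appears — and run the argument separately on each half-line; everything else is the direct application of the Gronwall Lemma together with the algebraic identity $u_1\,d\mu_1 - u_2\,d\mu_2 = w\,d\mu_1 + u_2\,d(\mu_1-\mu_2)$ and the merging of $-E\lambda$ into $\mu_1$, which is precisely what makes the source measure come out as $\abs{\mu_1-\mu_2}$ and the kernel measure as $\lambda + \abs{\mu_1-E\lambda}$.
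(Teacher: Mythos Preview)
Your argument is correct and follows the paper's proof essentially verbatim: derive the integral equations for $u_j$ and $u_j'$, subtract using the identity $u_1\,d\mu_1 - u_2\,d\mu_2 = w\,d\mu_1 + u_2\,d(\mu_1-\mu_2)$, absorb $-E\lambda$ into $\mu_1$, take norms, and apply the Gronwall Lemma with $\alpha(x)=C_0+\int_{I_x}\abs{u_2}\,d\abs{\mu_1-\mu_2}$ and kernel measure $C(\lambda+\abs{\mu_1-E\lambda})$. The only cosmetic differences are that the paper works with the Euclidean norm of the difference vector rather than your scalar $\phi=\abs{w}+\abs{u_1'-u_2'}$, and that it keeps the atom $u_2(0)(\mu_2-\mu_1)(\{0\})$ explicitly as the constant $C_0$ instead of folding it into the $I_x$-integral; neither affects the substance of the proof.
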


\begin{proof}
	Write
	\[u_1(x) - u_2(x) = \int_{0}^x (u_1'(t) - u_2'(t))\, dt\]
	and
	\begin{align*}
		u_1'(x) - u_2'(x) & = u_1'(0\rlim) - u_2'(0\rlim)  - \left(u_1(0)\mu_1(\set{0}) - u_2(0)\mu_2(\set{0})\right) \\
		& + \int_0^x u_1(t)\, d\mu_1(t) - \int_0^x u_2(t)\, d\mu_2(t) - \int_0^x E(u_1(t) - u_2(t))\, dt \\
		& = u_2(0)\left(\mu_2(\set{0}) - \mu_1(\set{0})\right) \\
		& + \int_0^x u_2(t)\, d(\mu_1-\mu_2)(t) + \int_0^x (u_1(t)-u_2(t))\, d(\mu_1-E\lambda)(t).
	\end{align*}
	Hence,
	\begin{align*}
		\begin{pmatrix}
			u_1(x) - u_2(x)\\
			u_1'(x) - u_2'(x)
		\end{pmatrix}
		& = \!\begin{pmatrix} 0 \\ u_2(0)\left(\mu_2(\set{0}) - \mu_1(\set{0})\right)\end{pmatrix} + \int_0^x \begin{pmatrix} 0 \\ u_2(t) \end{pmatrix} d(\mu_1-\mu_2)(t) \\
		& + \int_{0}^x \begin{pmatrix} 0 & 1 \\ 1 & 0 \end{pmatrix}\begin{pmatrix} u_1(t) - u_2(t) \\ u_1'(t) - u_2'(t)\end{pmatrix} \, d\begin{pmatrix} \lambda \\ \mu_1 - E\lambda \end{pmatrix}(t).
	\end{align*}
	We conclude, that
	\begin{align*}
		& \norm{\begin{pmatrix}
	         	u_1(x) \\
			u_1'(x)
	        \end{pmatrix}
		- 
		\begin{pmatrix}
	         	u_2(x) \\
			u_2'(x)
	        \end{pmatrix}} \\
		& \leq C_0 + \int_{I_x} \abs{u_2(t)}\, d \abs{\mu_1-\mu_2}(t) \\
		& + C\int_{I_x} \norm{\begin{pmatrix}
	         	u_1(t) \\
			u_1'(t)
	        \end{pmatrix}
		- 
		\begin{pmatrix}
	         	u_2(t) \\
			u_2'(t)
	        \end{pmatrix}}  \, d (\lambda + \abs{\mu_1 - E\lambda})(t).
	\end{align*}	
	An application of Lemma \ref{lem:gronwall} with $\alpha(x) = C_0 + \int_{I_x} \abs{u_2(t)}\, d \abs{\mu_1-\mu_2}(t)$ and $\mu = C(\lambda + \abs{\mu_1 - E\lambda})$ yields the assertion.
\end{proof}

\begin{remark}
\label{rem:further_est}
	Regarding the proof of Lemma \ref{lem:cont} we can further estimate $C_0 \leq \abs{u_2(0)}\abs{\mu_1-\mu_2}(I_x)$ ($x\in\R$).
\end{remark}

\begin{lemma}
	Let $E\in \R$ and $u_0$ be a solution of $-\Delta u_0 = E u_0$. Then 
	there is $C\geq 0$ such that $\abs{u_0(x)} \leq Ce^{C\abs{x}}$ for all 
	$x\in \R$.
\end{lemma}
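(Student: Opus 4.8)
The plan is to reduce the claim to the Gronwall Lemma~\ref{lem:gronwall}. Since here the perturbing measure is $0$, a solution $u_0$ of $-\Delta u_0 = Eu_0$ satisfies $Au_0 = u_0'$, so the defining equation $-(Au_0)' = Eu_0$ simply reads $-u_0'' = Eu_0$ in the sense of distributions; in particular $u_0' \in W_{1,\loc}^1(\R)$ and $u_0 \in C^1(\R)$ (indeed $C^2(\R)$). Writing $w := (u_0, u_0')^{T}$ we then have $w' = Bw$ almost everywhere, where $B := \begin{pmatrix} 0 & 1 \\ -E & 0 \end{pmatrix}$, and hence, for $x \ge 0$,
\[
\norm{w(x)} \le \norm{w(0)} + \int_{[0,x]} \norm{B}\,\norm{w(t)}\, dt.
\]

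First I would apply Lemma~\ref{lem:gronwall} to the last inequality with $u := \norm{w}$, constant $\alpha :\equiv \norm{w(0)}$ and the (locally finite) measure $\norm{B}\,\lambda$ on $[0,\infty)$; an elementary integration of the resulting bound gives $\norm{w(x)} \le \norm{w(0)}\,e^{\norm{B}x}$ for all $x \ge 0$. The reflected function $\tilde u_0(x) := u_0(-x)$ solves the same equation $-\Delta \tilde u_0 = E\tilde u_0$, so the very same estimate applied to $\tilde u_0$ covers $x \le 0$. Combining the two cases yields $\abs{u_0(x)} \le \norm{w(x)} \le C e^{C\abs{x}}$ for all $x \in \R$ with $C := \max\{\norm{w(0)}, \norm{B}\} \ge 0$.

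I do not expect any genuine obstacle in this argument. One could equally dispense with Gronwall and simply write down the explicit solutions of $-u_0'' = Eu_0$ — trigonometric for $E > 0$, affine for $E = 0$, hyperbolic for $E < 0$ — and read off the bound, the case $u_0(x) = a\cosh(\sqrt{-E}\,x) + b\sinh(\sqrt{-E}\,x)$ being the one that forces the exponential rate. The only points deserving a word of care are the initial regularity observation (that, since $\mu = 0$, $u_0$ is genuinely $C^1$, so that the fundamental theorem of calculus applies without jump terms) and the reduction of $x < 0$ to $x > 0$ by reflection, which is needed because Lemma~\ref{lem:gronwall} is stated only on $[0,\infty)$.
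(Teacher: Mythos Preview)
Your proof is correct. The paper actually states this lemma \emph{without proof}, treating it as an elementary fact about the constant-coefficient equation $-u_0'' = Eu_0$; so there is no argument in the paper to compare against.

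Both routes you sketch are valid. The Gronwall approach is clean and has the minor advantage of fitting the paper's toolbox (it invokes exactly Lemma~\ref{lem:gronwall}), while the explicit-solution approach is the most direct way to see the bound and makes the dependence of $C$ on $E$ transparent. Your care about the regularity (that $\mu=0$ forces $u_0\in C^2$, so no jump terms arise) and about handling $x<0$ by reflection is appropriate, though for this particular lemma one could also simply note that the explicit fundamental system is smooth and even/odd in $x$.
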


In the following lemmas and proofs the constant $C$ may change from line to line, but we will always state the dependence on the important quantities.

\begin{lemma}
\label{lem:periodic_solution}
	Let $\mu_1$ be a locally bounded $p$-periodic measure, $E\in\R$, $u_1$ a solution of $H_1u_1 = E u_1$. Then there is 
	$C\geq0$ such that
	\[\abs{u_1(x)} \leq Ce^{C\abs{x}} \quad(x\in\R).\]
\end{lemma}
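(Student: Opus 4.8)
The plan is to reduce the claim to a Gronwall estimate exactly as in Lemma~\ref{lem:cont}, but now comparing $u_1$ with the free solution $u_0$ rather than comparing two perturbed solutions, and to exploit periodicity to control the growth of the measure over long intervals. First I would fix a solution $u_0$ of $-\Delta u_0 = E u_0$ with $u_0(0) = u_1(0)$ and $u_0'(0) = u_1'(0\rlim)$; after normalizing we may assume $\abs{u_1(0)}^2 + \abs{u_1'(0\rlim)}^2 = 1$, since the general case follows by scaling. By the previous lemma there is $C_1\ge 0$ with $\abs{u_0(x)}\le C_1 e^{C_1\abs{x}}$ for all $x$, and hence also $\abs{u_0'(x)}$ is bounded by a similar expression (differentiate the explicit form, or use the first-order system). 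The point is that $u_0$ already satisfies the desired type of bound.

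Next I would apply Lemma~\ref{lem:cont} with $\mu_1$ as given and $\mu_2 := 0$ (so that $H_2 = -\Delta$ and $u_2 = u_0$), together with Remark~\ref{rem:further_est}. This gives, for every $x\in\R$,
\begin{align*}
	\norm{\begin{pmatrix} u_1(x)\\ u_1'(x)\end{pmatrix} - \begin{pmatrix} u_0(x)\\ u_0'(x)\end{pmatrix}}
	&\le \abs{u_0(0)}\abs{\mu_1}(I_x) + \int_{I_x}\abs{u_0(t)}\,d\abs{\mu_1}(t)\\
	&\quad + C\int_{I_x}\Bigl(\abs{u_0(0)}\abs{\mu_1}(I_x) + \int_{I_t}\abs{u_0}\,d\abs{\mu_1}\Bigr) e^{C(\lambda+\abs{\mu_1-E\lambda})(I_x(t))}\,d(\lambda+\abs{\mu_1-E\lambda})(t).
\end{align*}
Now I would insert the bounds $\abs{\mu_1}(I_x)\le(\abs{x}+1)\norm{\mu_1}_\loc$ and $\abs{u_0(t)}\le C_1 e^{C_1\abs{t}}$ on the right-hand side. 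Since $\mu_1$ is $p$-periodic and locally bounded it is uniformly locally bounded, so $(\lambda+\abs{\mu_1-E\lambda})(I_x(t))\le (\abs{x}+1)(1+\norm{\mu_1}_\loc + \abs{E})$ and the total mass of $\lambda+\abs{\mu_1-E\lambda}$ over $I_x$ is at most $(\abs{x}+1)(1+\norm{\mu_1}_\loc+\abs{E})$. Each factor on the right is therefore bounded by a constant times $e^{C'\abs{x}}$ for a suitable $C'$ depending only on $E$, $\norm{\mu_1}_\loc$ and $C_1$; multiplying the at most three such exponential factors together still gives a bound of the form $C_2 e^{C_2\abs{x}}$. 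Combining with the bound on $u_0$ via the triangle inequality yields $\abs{u_1(x)}\le \norm{(u_1(x),u_1'(x))}\le \abs{u_0(x)} + C_2 e^{C_2\abs{x}} \le C e^{C\abs{x}}$, which is the assertion.

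The main obstacle is bookkeeping rather than conceptual: one must check that all the quantities entering the estimate from Lemma~\ref{lem:cont} — the constant $C_0$, the single integral, and the iterated integral with the exponential weight — are genuinely dominated by $Ce^{C\abs{x}}$, using only uniform local boundedness of $\mu_1$ (which periodicity supplies) and the a priori exponential bound on the free solution. In particular one should be slightly careful that $\abs{\mu_1}(I_x)$ appears linearly (not exponentially) and that the exponent in the weight $e^{C(\lambda+\abs{\mu_1-E\lambda})(I_x(t))}$ grows at most linearly in $\abs{x}$, so that the whole right-hand side is at worst a product of exponentials in $\abs{x}$; this is exactly where $p$-periodicity is used, to prevent the perturbation from concentrating. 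Note that periodicity is otherwise not needed for a growth bound of this kind — any uniformly locally bounded $\mu_1$ would do — but the hypothesis as stated is what we have available and suffices.
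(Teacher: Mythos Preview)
Your proposal is correct and follows essentially the same route as the paper: compare $u_1$ with the free solution $u_0$ having the same data at $0$, apply Lemma~\ref{lem:cont} with $\mu_2=0$, and use uniform local boundedness of $\mu_1$ (supplied by periodicity) together with the exponential bound on $u_0$ to dominate every term by $Ce^{C\abs{x}}$. Your closing observation that only uniform local boundedness, not periodicity per se, is really needed is also accurate.
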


\begin{proof}
	Let $u_0$ be a solution of $-\Delta u_0 = E u_0$ subject to the same boundary conditions at $0$ as $u_1$.
	By Lemma \ref{lem:cont} we have
	\begin{align*}
		& \abs{u_1(x)-u_0(x)} \\
		& \leq C + \int_{I_x} \abs{u_0(t)}\, d \abs{\mu_1}(t) \\
		& + C\int_{I_x} \left(C + \int_{I_t} \abs{u_0(s)}\, d \abs{\mu_1}(s)\right)e^{C(\lambda + \abs{\mu_1 - E\lambda})(I_x(t))}\, d(\lambda + \abs{\mu_1 - E\lambda})(t) \\
		& \leq  C + \abs{\mu_1}({I_x})Ce^{C\abs{x}} \\
		& + \int_{I_x} \left(C + C\abs{\mu_1}({I_t})e^{C\abs{t}}\right) e^{(\lambda + \abs{\mu_1 - E\lambda})(I_x(t))}\, d(\lambda + \abs{\mu_1 - E\lambda})(t) \\
		& \leq \left(C + C\abs{\mu_1}({I_x})e^{C\abs{x}}\right) \left(1+e^{(\lambda + \abs{\mu_1 - E\lambda})({I_x})}(\lambda + \abs{\mu_1 - E\lambda})({I_x})\right).
	\end{align*}
	Since $\mu_1$ is periodic and locally bounded it is uniformly locally bounded and we have 
	$\abs{\mu_1}({I_x}) \leq (\abs{x}+1)\norm{\mu_1}_{\loc}$. Furthermore, also
	$\mu_1-E\lambda$ is periodic and uniformly locally bounded, so $\abs{\mu_1
	- E\lambda})({I_x}) \leq (\abs{x}+1)\norm{\mu_1-E\lambda}_{\loc}$. We 
	conclude that
	\begin{align*}
		& \abs{u_1(x)-u_0(x)} \\
		& \leq \left(C + C(\abs{x}+1)\norm{\mu_1}_{\loc}e^{C\abs{x}}\right)\times \\
		& \quad \quad \quad\times \left(1+e^{(\abs{x}+1)(1+\norm{\mu_1-E\lambda}_{\loc})}(\abs{x}+1)(1+\norm{\mu_1-E\lambda}_{\loc})\right) \\
		& \leq C e^{C\abs{x}},
	\end{align*}
	where $C$ is depending on $E$, $\norm{\mu_1}_{\loc}$ and $\norm{\mu_1-E\lambda}_{\loc}$.
	Hence,
	\[\abs{u_1(x)} \leq \abs{u_1(x)-u_0(x)} + \abs{u_0(x)} \leq Ce^{C\abs{x}}.\]
\end{proof}

\begin{lemma}
\label{lem:expest}
	Let $\mu$ be a Gordon measure and $(\mu^{m})$ the $p_m$-periodic 
	approximants, $E\in\R$. Let $u$ be a solution of $Hu = E u$, $u_m$ a 
	solution of $H_mu_m = E u_m$ for $m\in\N$ (obeying the same boundary 
	conditions at $0$).
	Then there is $C\geq 0$ such that
	\[\norm{\begin{pmatrix}
	         	u(x) \\
			u'(x)
	        \end{pmatrix}
		- 
		\begin{pmatrix}
	         	u_m(x) \\
			u_m'(x)
	        \end{pmatrix}} \leq Ce^{C\abs{x}} \abs{\mu - \mu^{m}}(I_x) \quad(x\in\R).\]
\end{lemma}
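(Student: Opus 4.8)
The plan is to feed the output of Lemma \ref{lem:cont} into itself, using the periodic case (Lemma \ref{lem:periodic_solution}) to control $u_2 = u_m$, and then to absorb everything into a single exponential by invoking the Gordon condition. More precisely, apply Lemma \ref{lem:cont} with $\mu_1 = \mu$, $\mu_2 = \mu^m$, so $\mu_1 - \mu_2 = \mu - \mu^m$; by Remark \ref{rem:further_est} the constant $C_0$ may be taken as $\abs{u_m(0)}\abs{\mu-\mu^m}(I_x) \le \abs{\mu-\mu^m}(I_x)$, since the normalization forces $\abs{u_m(0)} \le 1$. This already shows $C_0$ is dominated by $\abs{\mu-\mu^m}(I_x)$ up to a constant.

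The key estimates are then: first, $\int_{I_x}\abs{u_m(t)}\,d\abs{\mu-\mu^m}(t) \le \bigl(\sup_{t\in I_x}\abs{u_m(t)}\bigr)\abs{\mu-\mu^m}(I_x)$, and by Lemma \ref{lem:periodic_solution} applied to $\mu_1 = \mu$ — wait, more carefully, Lemma \ref{lem:periodic_solution} applies to $u_m$ with the $p_m$-periodic measure $\mu^m$, giving $\abs{u_m(t)} \le C_m e^{C_m\abs{t}}$. Here one must be slightly careful that the constant $C_m$ is uniform in $m$: this follows because $\norm{\mu^m}_\loc$ is bounded uniformly in $m$ (the $\mu^m$ are uniformly locally bounded and approximate $\mu$, so their local norms converge to $\norm{\mu}_\loc$, hence stay bounded). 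So $\abs{u_m(t)} \le Ce^{C\abs{t}}$ with $C$ independent of $m$. Second, the measure $\lambda + \abs{\mu - E\lambda}$ is uniformly locally bounded, so $(\lambda + \abs{\mu - E\lambda})(I_x) \le (\abs{x}+1)(1+\norm{\mu - E\lambda}_\loc)$ and likewise for $I_x(t) \subseteq I_x$.

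Plugging these in, the right-hand side of Lemma \ref{lem:cont} becomes a sum of terms each of the form (constant)$\times\abs{\mu-\mu^m}(I_x)$ times polynomial-in-$\abs{x}$ times $e^{C\abs{x}}$ factors — here using that $\abs{\mu-\mu^m}(I_t) \le \abs{\mu-\mu^m}(I_x)$ for $t\in I_x$ to pull the measure term out of the inner integral, exactly as in the proof of Lemma \ref{lem:periodic_solution}. Collecting the polynomial and exponential factors and absorbing polynomials into a slightly larger exponential constant yields the claimed bound $Ce^{C\abs{x}}\abs{\mu-\mu^m}(I_x)$, with $C$ depending on $E$, $\norm{\mu}_\loc$ and $\sup_m\norm{\mu^m}_\loc$ but not on $m$ or $x$. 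The main obstacle I anticipate is bookkeeping: ensuring that every constant that appears is genuinely independent of $m$ (which reduces to the uniform local boundedness of the approximants) and that the inner measure integral $\int_{I_t}\abs{u_m}\,d\abs{\mu-\mu^m}$ is correctly dominated by $\bigl(\sup_{I_x}\abs{u_m}\bigr)\abs{\mu-\mu^m}(I_x)$ so that it factors out cleanly; the rest is the same exponential-absorption computation already carried out for the periodic case.
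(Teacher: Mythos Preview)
Your proposal is correct and follows essentially the same route as the paper: apply Lemma~\ref{lem:cont} with $\mu_1=\mu$, $\mu_2=\mu^m$, use Remark~\ref{rem:further_est} for $C_0$, invoke Lemma~\ref{lem:periodic_solution} on $u_m$ with constants uniform in $m$ (via $\sup_m\norm{\mu^m}_\loc<\infty$), and then absorb the resulting polynomial-times-exponential factors into a single $Ce^{C\abs{x}}$ after pulling $\abs{\mu-\mu^m}(I_x)$ out of the integrals. Your self-correction about which measure Lemma~\ref{lem:periodic_solution} applies to and your identification of the uniform-in-$m$ bookkeeping as the only real issue are both on point.
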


\begin{proof}
	By Lemma \ref{lem:cont} and Remark \ref{rem:further_est} we know that
	\begin{align*}
		& \norm{\begin{pmatrix}
	         	u(x) \\
			u'(x)
	        \end{pmatrix}
		- 
		\begin{pmatrix}
	         	u_m(x) \\
			u_m'(x)
	        \end{pmatrix}}\\
		& \leq \abs{u_m(0)}\abs{\mu - \mu^m}(I_x) + \int_{I_x} \abs{u_m(t)}\, d \abs{\mu-\mu^{m}}(t) \\
		& + C\!\int_{I_x} \Big(\abs{u_m(0)}\abs{\mu - \mu^m}(I_t) + \int_{I_t} \abs{u_m}d \abs{\mu-\mu^{m}}\Big)\times \\
		& \qquad\qquad\qquad\qquad\qquad \times e^{C(\lambda + \abs{\mu - E\lambda})(I_x(t))}\, d(\lambda + \abs{\mu - E\lambda})(t).
	\end{align*}
	We have
	\[M:= \sup_{m\in \N} \norm{\mu^{m}}_{\loc} < \infty,\]
	since $(\mu^{m})$ approximates $\mu$. Hence, also 
	\[\sup_{m\in \N} \norm{\mu^{m}-E\lambda}_{\loc} < \infty\]
	and Lemma \ref{lem:periodic_solution} yields
	\[\abs{u_m(x)} \leq Ce^{C\abs{x}},\]
	where $C$ can be chosen independently of $m$. Therefore
	\begin{align*}
		& \norm{\begin{pmatrix}
	         	u(x) \\
			u'(x)
	        \end{pmatrix}
		- 
		\begin{pmatrix}
	         	u_m(x) \\
			u_m'(x)
	        \end{pmatrix}} \\
		& \leq \left(Ce^{C\abs{x}}\abs{\mu-\mu^{m}}({I_x}) + Ce^{C\abs{x}}\abs{\mu-\mu^{m}}({I_x})\right)\times \\
		& \qquad\qquad\qquad\qquad \times \left(1+e^{C(\lambda + \abs{\mu - E\lambda})(I_x)}(\lambda + \abs{\mu - E\lambda})({I_x})\right).
	\end{align*}
	Since 
	\[\abs{\mu - E\lambda}({I_x}) \leq (\abs{x}+1)\norm{\mu-E\lambda}_{\loc},\]
	we further estimate
	\[\norm{\begin{pmatrix}
	         	u(x) \\
			u'(x)
	        \end{pmatrix}
		- 
		\begin{pmatrix}
	         	u_m(x) \\
			u_m'(x)
	        \end{pmatrix}} \leq Ce^{C\abs{x}} \abs{\mu-\mu^{m}}({I_x})\]
	where $C$ is depending on $\norm{\mu-E\lambda}_{\loc}$ (and of course on $M$, $\norm{\mu}_\loc$ and $E$).
\end{proof}

Lemma \ref{lem:expest} can be regarded as a stability (or continuity) result: 
if the measures converge in total variation, the corresponding solutions 
converge as well.

Now, we focus on periodic measures and estimate the solutions.
This will then be applied to the periodic approximations of our Gordon measure 
$\mu$.

\begin{remark}
	\begin{enumerate}
		\item 
			Let $f,g$ be two solutions of the equation $Hu = Eu$. Define their Wronskian by $W(f,g)(x):= f(x)g'(x\rlim) - f'(x\rlim)g(x)$. By \cite{BenAmorRemling2005}, Proposition 2.5, $W(f,g)$ is constant.
		\item
			Let $u$ be a solution of the equation $Hu = Eu$. Define the 
			transfer matrix $T_E(x)$ mapping $(u(0),u'(0\rlim))^\top$ to 
			$(u(x),u'(x\rlim))^\top$. Consider now the two solutions $u_N$, $u_D$
			subject to
			\begin{align*}
				\begin{pmatrix} u_N(0)\\u_N'(0\rlim)\end{pmatrix} = \begin{pmatrix}
				1 \\ 0 	\end{pmatrix},\quad \begin{pmatrix} 
				u_D(0)\\u_D'(0\rlim)\end{pmatrix} = \begin{pmatrix}
								0 \\ 1 	\end{pmatrix}.
			\end{align*}
			Then
			\[T_E(x) = \begin{pmatrix}
			           	u_N(x) & u_D(x) \\
					u_N'(x\rlim) & u_D'(x\rlim)
			           \end{pmatrix}.\]
			We obtain $\det T_E(x) = W(u_N,u_D)(x)$ and $\det T_E$ is constant, hence equals $1$ for all $x\in \R$.
	\end{enumerate}
\end{remark}

\begin{lemma}
\label{lem:estimate}
	Let $\mu$ be $p$-periodic and $E\in \R$. Let $u$ be a solution of $Hu = E u$
	subject to
	\[\abs{u(0)}^2 + \abs{u'(0\rlim)}^2 = 1.\]
	Then
	\[\max\left\{\norm{\begin{pmatrix} u(-p) \\u'(-p\rlim)\end{pmatrix}}, 
	\norm{\begin{pmatrix} u(p) \\u'(p\rlim)\end{pmatrix}}, \norm{\begin{pmatrix} 
	u(2p) \\u'(2p\rlim)\end{pmatrix}}\right\} \geq \frac{1}{2}.\]
\end{lemma}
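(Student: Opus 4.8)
The plan is to reduce the statement to a short linear-algebra computation with the one-period transfer matrix $T:=T_E(p)$ introduced above. The point is that, by $p$-periodicity of $\mu$, the transfer matrix across any interval of length $p$ equals $T$; equivalently, if $u$ solves $Hu=Eu$ then so do the shifts $u(\cdot\pm p)$, which carry at the origin the data $\bigl(u(\pm p),u'(\pm p\rlim)\bigr)^\top$. Writing $v:=\bigl(u(0),u'(0\rlim)\bigr)^\top$, so that $\norm v=1$ by hypothesis, and using that $\det T_E(x)=1$ by the preceding remark (so $T$ is invertible), applying $T$ and $T^{-1}$ gives
\[ \begin{pmatrix} u(p)\\u'(p\rlim)\end{pmatrix}=Tv,\qquad \begin{pmatrix} u(2p)\\u'(2p\rlim)\end{pmatrix}=T^2v,\qquad \begin{pmatrix} u(-p)\\u'(-p\rlim)\end{pmatrix}=T^{-1}v. \]
Thus the claim becomes $\max\set{\norm{T^{-1}v},\norm{Tv},\norm{T^2v}}\geq\tfrac12$.

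Next I would invoke the Cayley--Hamilton theorem: since $\det T=1$ we have $T^2-(\operatorname{tr}T)\,T+I=0$, hence $T^{-1}=(\operatorname{tr}T)\,I-T$ and $T^2=(\operatorname{tr}T)\,T-I$. Applying these to $v$ yields the two identities
\[ T^{-1}v+Tv=(\operatorname{tr}T)\,v,\qquad T^2v+v=(\operatorname{tr}T)\,Tv. \]

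The conclusion then follows by contradiction. Assume $\norm{T^{-1}v}$, $\norm{Tv}$ and $\norm{T^2v}$ are all $<\tfrac12$. The first identity gives $\abs{\operatorname{tr}T}=\norm{(\operatorname{tr}T)\,v}\leq\norm{T^{-1}v}+\norm{Tv}<1$, and then the second gives $1=\norm v\leq\abs{\operatorname{tr}T}\,\norm{Tv}+\norm{T^2v}<\tfrac12+\tfrac12=1$, which is absurd. Hence at least one of the three norms is $\geq\tfrac12$, as claimed.

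The only step needing care is the first one: that shifting a solution of $Hu=Eu$ by $\pm p$ again yields a solution with the stated boundary values at $0$, and in particular that the right-derivative convention $u'(\cdot\rlim)$ is respected by the shift. This is precisely where $p$-periodicity of $\mu$ enters, through the distributional definition of ``solution'' via the operator $A$; once it is in place the rest is the classical Gordon three-block estimate and is purely algebraic.
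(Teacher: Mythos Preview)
Your proof is correct and is precisely the classical Gordon three-block argument via Cayley--Hamilton that the paper invokes by citing \cite[Lemma~2.2]{DamanikStolz2000}; the only adaptation needed here is exactly the one you flag, namely that $p$-periodicity of $\mu$ makes the one-period transfer matrix the same over $[-p,0]$, $[0,p]$ and $[p,2p]$, which follows from the definition of solutions through $A$ and the remark on $\det T_E\equiv 1$.
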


The proof of this lemma is completely analoguous to the proof of \cite[Lemma 2.2]{DamanikStolz2000}.

\begin{lemma}
\label{lem:u'_konvergiert}
 	Let $v\in L_2\cap BV_{\loc}(\R)$ and assume that for all $r>0$ we have
	\[\abs{v(x) - v(x+r)} \to 0 \quad (\abs{x}\to \infty).\]
	Then $\abs{v(x)}\to 0$ as $\abs{x}\to \infty$.
\end{lemma}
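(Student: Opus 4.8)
The plan is to argue by contradiction, using the translation hypothesis to \emph{transport} a single large value of $\abs{v}$ onto a whole interval of positive length, and then to contradict $v\in L_2(\R)$ via Fatou's lemma. So suppose the conclusion fails: there are $L\in(0,\infty]$ and a sequence $(x_n)$ with $\abs{x_n}\to\infty$ and $\abs{v(x_n)}\to L$. Passing to a subsequence we may assume $x_n\to+\infty$; the case $x_n\to-\infty$ is handled identically, since the hypothesis is phrased for $\abs{x}\to\infty$.

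First I would carry out the transport step. Fix $r\in(0,1)$. Since $x_n\to+\infty$ and $r$ is fixed, the hypothesis gives $\abs{v(x_n)-v(x_n+r)}\to 0$, so by the reverse triangle inequality
\[
\abs{v(x_n+r)}\ \ge\ \abs{v(x_n)}-\abs{v(x_n)-v(x_n+r)}\ \longrightarrow\ L ,
\]
whence $\liminf_{n\to\infty}\abs{v(x_n+r)}\ge L$. This holds for \emph{every} $r\in(0,1)$.

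Next I would integrate in $r$ and apply Fatou's lemma to the nonnegative functions $r\mapsto\abs{v(x_n+r)}$ on $(0,1)$, then compare with the $L_2$ tail:
\[
L\ \le\ \int_0^1\liminf_{n\to\infty}\abs{v(x_n+r)}\,dr\ \le\ \liminf_{n\to\infty}\int_0^1\abs{v(x_n+r)}\,dr\ =\ \liminf_{n\to\infty}\int_{x_n}^{x_n+1}\abs{v(t)}\,dt .
\]
On the other hand, by Cauchy--Schwarz $\int_{x_n}^{x_n+1}\abs{v}\le\bigl(a_{\lfloor x_n\rfloor}+a_{\lfloor x_n\rfloor+1}\bigr)^{1/2}$ with $a_k:=\int_k^{k+1}\abs{v}^2$, and $\sum_k a_k=\norm{v}_2^2<\infty$ forces $a_k\to 0$; hence the right-hand side of the display is $0$, giving $L\le 0$, a contradiction. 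Running the same argument along a subsequence tending to $-\infty$ then yields $\abs{v(x)}\to 0$ as $\abs{x}\to\infty$.

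The points to watch are routine rather than real obstacles: one should bound $\int_{x_n}^{x_n+1}\abs{v}^2$ by \emph{two} consecutive tail terms of the convergent series $\sum_k a_k$ so that it genuinely tends to $0$ (not merely along a subsequence), and one should note that the case $L=+\infty$ is already covered, since then $\abs{v(x_n+r)}\to\infty$ for each fixed $r$ and Fatou directly contradicts $\int_{x_n}^{x_n+1}\abs{v}\to 0$. Incidentally, this argument does not actually use $v\in BV_\loc(\R)$; that hypothesis only ensures that $v$ has honest pointwise values, so that both the assumption and the conclusion are meaningful statements about $v$ itself (a proof through the variation measure of $v$ is also possible, but longer).
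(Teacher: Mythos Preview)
Your argument is correct and is genuinely different from the paper's proof. Both proofs start from the same contradiction setup (a sequence $x_n\to\infty$ with $\abs{v(x_n)}$ bounded below) and both exploit that $\int_{x_n}^{x_n+1}\abs{v}^2\to 0$ from $v\in L_2$. The paper then works \emph{pointwise}: it uses Chebyshev's inequality and a Borel--Cantelli type argument to produce a single $r\in(0,1]$ for which $v(r_n+r)\to 0$ along a subsequence, contradicting $\abs{v(r_n)-v(r_n+r)}\to 0$ directly. You instead \emph{integrate} the transport step over $r\in(0,1)$ via Fatou's lemma and reach the contradiction through $\int_{x_n}^{x_n+1}\abs{v}\to 0$. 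Your route is shorter and avoids the subsequence extraction and the measure-of-bad-sets bookkeeping; the paper's route has the minor advantage of identifying an explicit $r$ at which the hypothesis fails. Your remark that $BV_\loc$ is not really used in either argument (beyond giving $v$ well-defined pointwise values) is accurate. One cosmetic point: the detour through $a_k=\int_k^{k+1}\abs{v}^2$ is unnecessary, since $\int_{x_n}^{x_n+1}\abs{v}^2\le\int_{x_n}^\infty\abs{v}^2\to 0$ directly from $v\in L_2(\R)$.
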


\begin{proof}
	Without restriction, we can assume that $v\geq 0$. We prove this lemma by contradiction. Assume that $v(x)\to 0$ does not hold for $x\to \infty$. Then we can find $\delta>0$ and $(q_k)$ in $\R$ with $q_k\to \infty$ such that $v(q_k)\geq \delta$ for all $k\in\N$.
	By square integrability of $v$ we have $\norm{v\1_{[q_k,q_k+1]}}_{2} \to 0$. Therefore, we can find a subsequence $(r_n)$ of $(q_k)$ satisfying
	\[\norm{v\1_{[r_n,r_n+1]}}_{2} \leq 2^{-\frac{3}{2}n} \quad(n\in \N).\]
	Now, Chebyshev's inequality implies
	\[\lambda(\set{x\in[r_n,r_n+1];\; v(x)\ge 2^{-n}}) \leq 2^{2n} \norm{v\1_{[r_n,r_n+1]}}_{2}^2 \leq 2^{-n} \quad(n\in\N).\]
	Denote $A_n:= \set{x\in[r_n,r_n+1];\; v(x)\ge 2^{-n}} - r_n \subseteq [0,1]$. Then $\lambda(A_n) \leq 2^{-n}$ and
	\[\lambda\left(\bigcup_{n\geq 3} A_n\right) \leq \sum_{n\geq 3} 
	\lambda(A_n) \leq 2^{-2} <1.\]
	Hence, $G:=[0,1]\setminus (\bigcup_{n\geq 3} A_n)$ has positive measure. For $r\in G$, $r>0$ it follows
	\[v(r_n+r)\le 2^{-n} \quad (n\geq 3).\]
	Therefore,
	\[\liminf_{n\to\infty} \abs{v(r_n)-v(r_n+r)} \geq \delta>0,\]
	a contradiction.
\end{proof}

\begin{lemma}
\label{lem:konvergenz}
	Let $\mu$ be a Gordon measure, $E\in \R$, $u\in D(H)$ a solution of $Hu=Eu$. Then $u(x)\to 0$ as $x\to \infty$ and $u'(x)\to 0$ as $x\to \infty$.
\end{lemma}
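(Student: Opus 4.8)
The plan is to treat the two claims separately. The assertion $u(x)\to0$ is soft: since $u\in D(H)\subseteq D(\form)=W_2^1(\R)$ and $W_2^1(\R)\hookrightarrow C_0(\R)$ — for instance because $\abs{u(x)}^2=2\operatorname{Re}\int_{-\infty}^x u'(t)\overline{u(t)}\,dt$, which is legitimate since $u'\overline{u}\in L_1(\R)$ makes $\lim_{y\to-\infty}\abs{u(y)}^2$ exist and $u\in L_2(\R)$ forces that limit (and the one at $+\infty$) to be $0$ — we get $u(x)\to0$ as $\abs{x}\to\infty$. In particular $\norm{u}_{\infty,[x,x+r]}\to0$ as $\abs{x}\to\infty$ for every fixed $r>0$, a fact used below. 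The claim $u'(x)\to0$ is precisely the output of Lemma~\ref{lem:u'_konvergiert} applied to $v:=u'$, provided its two hypotheses are verified.

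Throughout I would use the representative $u'(x)=Au(x)+\int_0^x u(t)\,d\mu(t)$ from the Remark following Lemma~\ref{lem:Hsubseteq_T}, for which the identities below hold pointwise. Then $u'\in L_2(\R)$ because $u\in W_2^1(\R)$, and $u'\in BV_{\loc}(\R)$ because $H\subseteq T$ (Lemma~\ref{lem:Hsubseteq_T}) gives $u\in D(T)$ with $(Au)'=-Hu=-Eu\in L_2(\R)\subseteq L_{1,\loc}(\R)$, so $Au$ is locally absolutely continuous, while $x\mapsto\int_0^x u\,d\mu$ has variation over any bounded interval $J$ at most $\int_J\abs{u}\,d\abs{\mu}\le\norm{u}_{\infty,J}\,\abs{\mu}(J)<\infty$ by uniform local boundedness of $\mu$. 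Hence $v=u'\in L_2\cap BV_{\loc}(\R)$.

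It then remains to check $\abs{u'(x)-u'(x+r)}\to0$ as $\abs{x}\to\infty$ for each $r>0$. Taking $\abs{x}$ large enough that $x$ and $x+r$ lie on the same side of $0$, and using $(Au)'=-Eu$,
\[u'(x+r)-u'(x)=\bigl(Au(x+r)-Au(x)\bigr)+\int_{(x,x+r]}u(t)\,d\mu(t)=-E\int_x^{x+r}u(t)\,dt+\int_{(x,x+r]}u(t)\,d\mu(t),\]
whence $\abs{u'(x+r)-u'(x)}\le\bigl(\abs{E}\,r+(\lfloor r\rfloor+1)\norm{\mu}_{\loc}\bigr)\norm{u}_{\infty,[x,x+r]}\to0$ as $\abs{x}\to\infty$. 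Lemma~\ref{lem:u'_konvergiert} now yields $\abs{u'(x)}\to0$ as $\abs{x}\to\infty$, in particular as $x\to\infty$. I do not expect a genuine obstacle here: the substantive work is already packaged in Lemma~\ref{lem:u'_konvergiert}, and the rest is bookkeeping. The only points requiring a little care are fixing the locally-$BV$ representative of $u'$ consistently and checking $\int_0^{x+r}u\,d\mu-\int_0^{x}u\,d\mu=\int_{(x,x+r]}u\,d\mu$ in view of the case split in the definition of $\int_0^{x}u\,d\mu$ and of possible point masses of $\mu$ — both harmless once $\abs{x}$ is large.
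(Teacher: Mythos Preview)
Your proof is correct and follows essentially the same route as the paper: obtain $u(x)\to0$ from $W_2^1(\R)\hookrightarrow C_0(\R)$, write $u'(x+r)-u'(x)=\bigl(Au(x+r)-Au(x)\bigr)+\int_{(x,x+r]}u\,d\mu=-E\int_x^{x+r}u\,dt+\int_{(x,x+r]}u\,d\mu$, bound by a constant times $\norm{u}_{\infty,[x,x+r]}\to0$, and apply Lemma~\ref{lem:u'_konvergiert} to $v=u'$. If anything you are slightly more careful than the paper, since you explicitly verify the $BV_{\loc}$ hypothesis on $u'$ and fix a pointwise representative, points the paper leaves implicit.
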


\begin{proof}
	Since $u\in D(H)\subseteq D(\form)\subseteq W_2^1(\R)$ we have $u(x)\to 0$ as $\abs{x}\to \infty$.
	Lemma \ref{lem:Hsubseteq_T} yields $u\in D(T)$ and $-(Au)' = Hu = Eu$. Let $r>0$. Then, for almost all $x\in \R$,
	\begin{align*}
		u'(x+r) - u'(x) & = Au(x+r) - Au(x) + \int_{(x,x+r]} u(t)\, d\mu(t) \\
		& = \int_{x}^{x+r} (Au)'(y)\, dy + \int_{(x,x+r]} u(t)\, d\mu(t).
	\end{align*}
	Hence,
	\begin{align*}
	 	\abs{u'(x+r) - u'(x)} & \leq \abs{E} \int_x^{x+r} \abs{u(y)}\, dy + \int_{(x,x+r]} \abs{u(t)}\, d\abs{\mu}(t) \\
		& \leq \abs{E}r\norm{u}_{\infty,[x,x+r]} + \norm{u}_{\infty,[x,x+r]}\abs{\mu}([x,x+r]) \\
		& \leq \norm{u}_{\infty,[x,x+r]}\left(\abs{E}r +(r+1)\norm{\mu}_{\loc}\right).
	\end{align*}
	By Sobolev's inequality, there is $C\in \R$ (depending on $r$, but $r$ is fixed anyway) such that
	\[\norm{u}_{\infty,[x,x+r]} \leq C \norm{u}_{W_{2}^1(x,x+r)} \to 0 \quad (\abs{x}\to \infty).\]
	Thus,
	\[\abs{u'(x+r) - u'(x)}\to 0 \quad (\abs{x}\to \infty).\]
	An application of Lemma \ref{lem:u'_konvergiert} with $v:=u'$ yields $u'(x)\to 0$ as $\abs{x}\to \infty$.
\end{proof}

Now, we can state the main result of this paper.

\begin{theorem}
\label{thm:Gordon}
	Let $\mu$ be a Gordon measure. Then $H$ has no eigenvalues.
\end{theorem}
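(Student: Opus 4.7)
The plan is to argue by contradiction. Suppose $u\in D(H)$ is a nonzero solution of $Hu=Eu$. Unique solvability of the Cauchy problem for $T$ (in the framework of \cite{BenAmorRemling2005}) forces $u\equiv 0$ whenever $u(0)=u'(0\rlim)=0$, so $|u(0)|^2+|u'(0\rlim)|^2>0$; after rescaling $u$ (which remains in $D(H)$) I may assume this sum equals $1$.

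For each $m$, let $u_m$ denote the solution of $H_m u_m = E u_m$ with $H_m := -\Delta + \mu^m$ and with the same initial data at $0$ as $u$. Since $\mu^m$ is $p_m$-periodic, Lemma \ref{lem:estimate} supplies $x_m \in \{-p_m, p_m, 2p_m\}$ with
\[\left\|\begin{pmatrix} u_m(x_m) \\ u_m'(x_m\rlim)\end{pmatrix}\right\| \geq \tfrac12.\]
Two cheap observations are crucial: $|x_m| \leq 2p_m$ and $I_{x_m} \subseteq [-p_m, 2p_m]$. Combined with $\sup_m \|\mu^m\|_\loc < \infty$, which lets the constant in Lemma \ref{lem:expest} be chosen uniformly in $m$, that lemma yields
\[\left\|\begin{pmatrix} u(x_m) \\ u'(x_m\rlim)\end{pmatrix} - \begin{pmatrix} u_m(x_m) \\ u_m'(x_m\rlim)\end{pmatrix}\right\| \leq C e^{2Cp_m}\,|\mu - \mu^m|([-p_m, 2p_m]).\]
Applying the Gordon condition with the constant $2C$ drives the right-hand side to zero, and the triangle inequality then gives $\liminf_{m\to\infty}\|(u(x_m), u'(x_m\rlim))\| \geq \tfrac12$.

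To close, I note $|x_m| \geq p_m \to \infty$. The proof of Lemma \ref{lem:konvergenz} in fact establishes that both $u(x) \to 0$ and $u'(x) \to 0$ as $|x| \to \infty$ (not merely as $x\to+\infty$), so $(u(x_m), u'(x_m\rlim)) \to 0$, contradicting the previous lower bound. The main obstacle is the bookkeeping around the exponential factor $e^{C|x_m|}$ in Lemma \ref{lem:expest}: because $|x_m|$ can be as large as $2p_m$, one must kill $e^{C'p_m}|\mu - \mu^m|([-p_m, 2p_m])$ for arbitrarily large $C'$, and this is exactly why the Gordon definition requires the decay for every $C\in\R$ rather than for just one specific value.
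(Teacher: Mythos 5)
Your proof is correct and follows essentially the same route as the paper: normalize the Cauchy data at $0$, pick $x_m\in\{-p_m,p_m,2p_m\}$ from Lemma \ref{lem:estimate}, use Lemma \ref{lem:expest} together with the ``for all $C$'' clause in the Gordon condition to make the transfer difference vanish, and contradict Lemma \ref{lem:konvergenz} since $|x_m|\to\infty$. Your explicit treatment of the normalization and of the degenerate case $u(0)=u'(0\rlim)=0$ via uniqueness of the Cauchy problem is a small but welcome addition that the paper leaves implicit.
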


\begin{proof}
	Let $(\mu^m)$ be the periodic approximants of $\mu$. Let $E\in \R$ and $u$ be a solution of $Hu = Eu$. Let $(u_m)$ be the 
	solutions for the measures $(\mu^{m})$.
	By Lemma \ref{lem:expest} we find $m_0\in \N$ such that
	\[\norm{\begin{pmatrix}
	         	u(x) \\
			u'(x)
	        \end{pmatrix}
		- 
		\begin{pmatrix}
	         	u_m(x) \\
			u_m'(x)
	        \end{pmatrix}} \leq \frac{1}{4}\]
	for $m\geq m_0$ and almost all $x\in [-p_m,2p_m]$.
	By Lemma \ref{lem:estimate} we have
	\[\limsup_{\abs{x}\to \infty} \left(\abs{u(x)}^2 + \abs{u'(x)}^2\right) 
	\geq \frac{1}{4} > 0.\]
	Hence, $u$ cannot be in $D(H)$ by Lemma \ref{lem:konvergenz}.
\end{proof}

\section{Examples}
\label{sec:examples}

\begin{remark}[periodic measures]
	Every locally bounded periodic measure on $\R$ is a Gordon measure. Thus, for $\mu:= \sum_{n\in\Z} \delta_{n+\frac{1}{2}}$ the operator  $H:= -\Delta + \mu$ has no eigenvalues.
\end{remark}

Some examples of quasi-periodic $L_{1,\loc}$-potentials can be found in 
\cite{DamanikStolz2000}.

For a measure $\mu$ and $x\in\R$ let $T_x\mu:= \mu(\cdot-x)$. If $\mu$ is periodic with period $p$, then $T_p\mu = \mu$.

\begin{example}
	Let $\alpha\in(0,1)\setminus \Q$. There is a unique continued fraction expansion
	\[\alpha = \frac{1}{a_1+\frac{1}{a_2+\frac{1}{a_3+\frac{1}{\ldots}}}}\]
	with $a_n\in\N$. For $m\in\N$ we set $\alpha_m = \frac{p_m}{q_m}$, where
	\begin{align*}
		p_0 & = 0 & p_1 & = 1 & p_m & = a_mp_{m-1}+p_{m-2} \\
		q_0 & = 1 & q_1 & = a_1 & q_m & = a_mp_{m-1}+q_{m-2} .
	\end{align*}
	$\alpha$ is called \emph{Liouville number}, if there is $B\geq 0$ such that
	\[\abs{\alpha-\alpha_m} \leq Bm^{-q_m}.\]
	The set of Liouville numbers is a dense $G_\delta$.

	Let $\nu,\tilde{\nu}$ be $1$-periodic measures and assume that there is $\gamma>0$ such that 
	\[\abs{\nu(\cdot-x)-\nu}([0,1])\leq \abs{x}^\gamma \quad(x\in\R).\]
	Define $\mu:=\tilde{\nu} + \nu\circ \alpha$ and $\mu^{m}:=\tilde{\nu} + \nu\circ\alpha_m$ for $m\in\N$. 
	Then $\mu^m$ is $q_m$-periodic and
	\begin{align*}
		\abs{\mu-\mu^m}([-q_m,2q_m]) & = \abs{\nu\circ \alpha - \nu\circ\alpha_m}([-q_m,2q_m]) \\
		& = \abs{\nu\circ\frac{\alpha}{\alpha_m}-\nu}([-p_m,2p_m]) \\
		& \leq \sum_{n=-p_m}^{2p_m-1} \abs{\nu\circ\frac{\alpha}{\alpha_m}-\nu}([n,n+1]).
	\end{align*}
	Now, we have
	\begin{align*}
		\abs{\nu\circ\frac{\alpha}{\alpha_m}-\nu}([n,n+1]) 
		& = T_{-n}\abs{\nu\circ\frac{\alpha}{\alpha_m}-\nu}([0,1]) \\
		& = \abs{T_{-n}(\nu\circ\frac{\alpha}{\alpha_m})-T_{-n}\nu}([0,1]) \\
		& = \abs{T_{-n}(\nu\circ\frac{\alpha}{\alpha_m})-\nu}([0,1]).
	\end{align*}
	With $g_{m,n}(y):= y+\left(\frac{\alpha}{\alpha_m} - 1\right)(y+n)$ and using periodicity of $\nu$ we obtain
	\[T_{-n}(\nu\circ\frac{\alpha}{\alpha_m}) = \nu\circ g_{m,n}.\]
	Hence,
	\[\abs{\nu\circ\frac{\alpha}{\alpha_m}-\nu}([n,n+1]) = \abs{\nu\circ g_{m,n} - \nu}([0,1]).\]
	For $y\in[0,1]$ and $n\in\set{-p_m,\ldots,2p_m-1}$ we have
	\[\abs{\left(\frac{\alpha}{\alpha_m} - 1\right)(y+n)} \leq \abs{\frac{\alpha}{\alpha_m} - 1}\leq 2q_mBm^{-q_m}.\]
	Thus,
	\[\abs{\nu\circ g_{m,n} - \nu}([0,1]) \leq \left(2q_mBm^{-q_m}\right)^\gamma = (2q_mB)^\gamma m^{-q_m\gamma}.\]
	
	We conclude that
	\[\abs{\mu-\mu^m}([-q_m,2q_m]) \leq 3p_m (2q_mB)^\gamma m^{-q_m\gamma}\]
	and therefore for arbitrary $C\geq 0$
	\[e^{Cq_m}\abs{\mu-\mu^m}([-q_m,2q_m])\to 0 \quad(m\to \infty).\]
	Hence $\mu$ is a Gordon potential and $H:= -\Delta+\mu$ does not have any eigenvalues.
\end{example}

\section*{Acknowledgements}

We want to thank Peter Stollmann for advices and hints and especially for providing a proof of Lemma \ref{lem:u'_konvergiert}. Many thanks to the unkown referee for useful comments.

\bigskip

\noindent
Christian Seifert\\
Fakult\"at Mathematik\\
Technische Universit\"at Chemnitz\\
09107 Chemnitz, Germany \\
{\tt christian.seifert@mathematik.tu-chemnitz.de}

\end{document}